\newtheorem{lemma}{\textbf{Lemma}}
\newtheorem{theorem}{\textbf{Theorem}}
\newtheorem{remark}{\textbf{Remark}}
\begin{document}
\title{\LARGE Robotic Message Ferrying for Wireless Networks using Coarse-Grained Backpressure Control }

\author{
\IEEEauthorblockN{Shangxing Wang}
\IEEEauthorblockA{
Dept. of Electrical Engineering, \\
University of Southern California, \\
Los Angeles, CA\\
Email: shangxiw@usc.edu
}
\and
\IEEEauthorblockN{Andrea Gasparri}
\IEEEauthorblockA{Department of Engineering\\
Roma Tre University \\
Rome, Italy\\
Email: gasparri@dia.uniroma3.it}
\and
\IEEEauthorblockN{Bhaskar Krishnamachari}
\IEEEauthorblockA{
Dept. of Electrical Engineering, \\
University of Southern California, \\
Los Angeles, CA\\
Email: bkrishna@usc.edu}

}

\maketitle

\begin{abstract}
We formulate the problem of robots ferrying messages between statically-placed
source and sink pairs that they can communicate with wirelessly. We first analyze 
the capacity region for this problem under both ideal (arbitrarily high velocity, 
long scheduling periods) and realistic conditions. We indicate how robots could be 
scheduled optimally to satisfy any arrival rate in the capacity region,
given prior knowledge about arrival rates. We find that if the number of 
robots allocated grows proportionally with the number of source-sink pairs, then the capacity of 
the network scales as $\Theta(1)$, similar to what was shown previously by Grossglauser and
Tse for uncontrolled mobility; however, in contrast to that prior result, we also find that 
with controlled mobility this constant capacity scaling can be obtained while ensuring finite delay.
We then consider the setting where the arrival rates are unknown and present a coarse-grained backpressure message ferrying algorithm (CBMF) for it. In CBMF, the robots are matched to sources and sinks once every epoch to maximize a queue-differential-based weight. The matching controls both motion 
and transmission for each robot: if a robot is matched to a source, 
it moves towards that source and collects data from it; and if it is matched to a sink, it 
moves towards that sink and transmits data to it. We show through analysis
and simulations the conditions under which CBMF can stabilize the network. We show that 
the maximum achievable stable throughput with this policy tends to the ideal capacity 
as the schedule duration and robot velocity increase. 
\end{abstract}

\IEEEpeerreviewmaketitle

\section{Introduction}

Since the work by Tse and Grossglauser~\cite{tse02}, it has been known that the use of delay tolerant mobile communications
can dramatically increase the capacity of wireless networks by providing ideal constant throughput
scaling with network size at the expense of delay. However, nearly all the work to date has focused on 
message ferrying in intermittently connected mobile networks where the mobility is either unpredictable, or
predictable but uncontrollable. With the rapidly growing interest in multi-robot systems, we are entering an 
era where the position of network elements can be explicitly controlled in order to improve communication performance. 

This paper explores the fundamental limits of robotically controlled message ferrying in a wireless network. We consider
a setting in which a set of $K$ pairs of static wireless nodes act as sources and sinks that communicate not directly with each other 
(possibly because they are located far from each other and hence cannot communicate with each other at sufficiently high 
rates) but through a set of $N$ controllable robots. We assume that there is a centralized control plane (which, 
because it collects only queue state information about all network entities, can be relatively inexpensively created 
either using infrastructure such as cellular / WiFi, or through a low-rate multi-hopping mesh overlay).

We mathematically characterize the capacity region of this system, considering both ideal (arbitrarily large) and realistic (finite) settings with respect to robot mobility and scheduling durations. This analysis shows that with $N=2K$ robots the system could
be made to operate at full capacity (effectively at the same throughput as if all sources and sinks were adjacent to each other). 
We indicate how any traffic that is within the capacity region of this network can be served stably if the data arrival rates are known 
to the scheduler. We then consider how to schedule the robots when the arrival rates are not known \emph{a priori}. For this case, we 
propose and evaluate a queue-backpressure based algorithm for message ferrying that is coarse-grained in the sense that robot motion and relaying decisions are made once every fixed-duration epoch. We show that as the epoch duration and velocity of robots both increase, the throughput performance of this algorithm rapidly approaches that of the ideal case. 

\section{Problem Formulation}

There are $K$ pairs of static source and destination nodes located at arbitrary locations. Let the source for the $i^{th}$ flow be denoted as $src(i)$, and the destination or sink for that flow be denoted as $sink(i)$. Source $i$ receives packets at a constant rate denoted by $\lambda_i$. 

There are $N \le 2K$ mobile robotic nodes that act as message ferries, i.e. when they talk to a source node, they can collect packets from it, and when they talk to a sink node, they can transmit packets to it. Furthermore, for simplicity, we assume that the static nodes do not communicate directly with each other, but rather only through the mobile robots.  

\begin{figure}
    \centering
   \includegraphics[width=.45\textwidth]{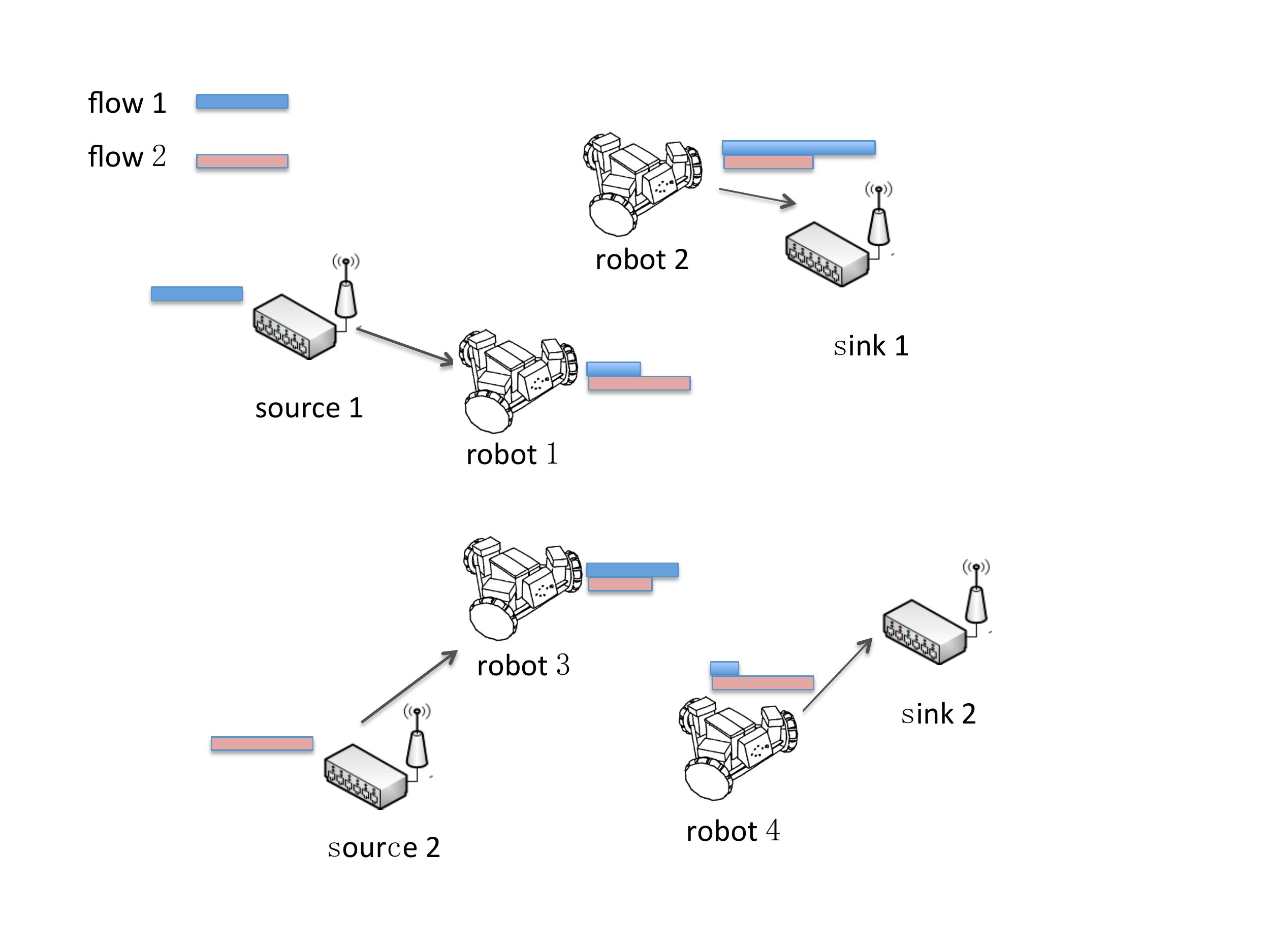}
    \caption{A network containing 2 pairs of source and sink nodes and 4 robots}
	\label{fig:networkIllustration}
\end{figure}

Time is divided into discrete time steps of unit duration. The locations of the sources and sinks for flow $i$ are denoted by $x_{src(i)}$ and $x_{sink(i)}$ respectively, and the location of robot $j$ at time $t$ is denoted as $x_j(t)$. Let the distance between a source for flow $i$ and a robot $j$ be denoted as $d(x_{src(i)}, x_j(t))$ (similarly for the sink). When in motion, the robotic nodes move with a uniform velocity $v$ directly to the destination (there are no obstacles), so that if robot $j$ is moving towards the source for flow $i$, its position $x_j(t)$ is updated so that it moves along the vector between its previous position and the source location to be at the following distance: 
\begin{equation} d(x_{src(i)}, x_j(t+1)) = max\{ d(x_{src(i)}, x_j(t)) - v , 0 \} \end{equation}

We assume that the rate at which a source for flow $i$ can transmit to a robot $j$, denoted by $R_{src(i), j}(t)$ is always strictly positive, and decreases monotonically with the distance between them, and similarly for the rate at which a robot $j$ can transmit to the sink for flow $i$, denoted by $R_{j, sink(i)}(t)$. We assume that when the robot is at a location of a particular source or sink, (i.e., the distance between them is 0), the corresponding throughput between the mobile robot and that source or sink is $R_{max}$

The queue at the source for flow $i$ is denoted as $Q_{src(i)}$. It is assumed that there is no queue at the sinks as they directly consume all packets intended for them. Each robot $j$ maintains a separate queue for each flow $i$, labelled $Q_j^i$. Figure~\ref{fig:networkIllustration} shows an illustration of this system with $K=2$ flows and $N=4$ robots. 

Every $T$ time steps there is a new epoch. At the start of each epoch, it is assumed that the information about queue states of all source and sink nodes as well as all queues at each of the robots is made available to a centralized scheduler. At that time this centralized scheduler can use this information to match each robot to either a source or sink. The matching is represented by an allocation matrix $A$ such that $A(i,j)$ is $0$ if the robot $j$ is not allocated to either source or sink for flow $i$,  $1$ if it is allocated to $src(i)$, and $-1$ if it is allocated to $sink(i)$. When a robot is allocated to a given source (or sink), for the rest of that epoch it moves closer to that node until it reaches its position. At all time steps of that epoch that robot will communicate exclusively with that source (or sink) to pick up (or drop, in case of the sink) any available packets between the corresponding queues at a rate depending on its current distance to that node. 

If a robot $j$ is communicating with $src(i)$ at time $t$, the update equations for the corresponding queue of the robot and the source queue will be as follows:
\begin{equation}
\begin{aligned}
n_p(t) & =  \min \{ R_{src(i), j}(t), Q_{src(i)}(t) \} \\
Q_j^i(t+1) & =  Q_j^i(t+1) + n_p(t) \\
Q_{src(i)}(t+1) & =  Q_{src(i)}(t+1) - n_p(t) + \lambda_i 
\end{aligned}.
\end{equation}

Similarly, if the robot $j$ is communicating with $sink(i)$ at time $t$, the queue update equation for the robot's corresponding queue will be: 
\begin{equation}
\begin{aligned}
n_p(t) & =  \min \{ R_{j, sink(i)}(t), Q_j^i(t) \} \\
Q_j^i(t+1) & =  Q_j^i(t+1) - n_p(t)
\end{aligned}.
\end{equation}

\section{Capacity Analysis}

We define an open region $\mathbf{\Lambda}$ of arrival rates as follows:
\begin{equation}
 \mathbf{\Lambda} = \left\{\mathbf{\lambda} | 0 \le \lambda_i < {R_{\max}}, \quad \forall~i,~\sum\limits_{i=1}^K \lambda_i < \frac{R_{\max}\,N}{2}\right\}.
\end{equation}

We shall show that this arrival rate region $\mathbf{\Lambda}$ can be served by a convex combination of configurations in which robots are allocated to serve distinct flows. Let $\tilde{\Gamma}$ be a finite set of vectors defined as:
\begin{equation}
\tilde{\Gamma} = \left\{ \gamma | \gamma_i = \frac{a_i \, R_{\max}}{2}, \quad \forall~i, a_i \in \{0, 1, 2\}, \sum\limits_{i=1}^K a_i \le N  \right\}. 
\end{equation}

For each element of this set $\tilde{\Gamma}$, the corresponding integer vector $\mathbf{a}$ corresponds to a ``basis" allocation of robots to distinct sources and sinks that can service each flow at rate $\gamma_i$. Specifically, $a_i$ refers to the number of robots allocated to serve flow $i$. If $a_i = 1$, this means exactly one robot is allocated to flow $i$, and can serve this flow maximally by spending half its time near the source and half the time near the sink (ignoring for now the time spent in transit), yielding a maximum service rate of $\gamma_i  = R_{max}/2$. If two robots are allocated  to a flow $i$, we have that $a_i = 2$, in which case two robots take turns spending time at the source and sink of the flow respectively for half the time each, yielding a net rate of $\gamma_i = R_{max}$. The constraints on $a_i$ ensure that the total number of robots allocated does not exceed the available number $N$. 

Let us refer to the convex hull of $\tilde{\Gamma}$ as $\mathcal{H}(\tilde{\Gamma})$ or, for readability, simply $\mathcal{H}$.

\begin{lemma}\label{lemma:H}
$\mathcal{H} \supset \Lambda$
\end{lemma}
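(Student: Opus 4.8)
The plan is to identify $\mathcal{H}$ explicitly as a polytope and then note that $\Lambda$ is simply its ``open'' version. Introduce the closed polytope
\[
P = \left\{\, \gamma \in \mathbb{R}^K : 0 \le \gamma_i \le R_{\max}\ \ \forall i,\ \ \sum_{i=1}^K \gamma_i \le \frac{R_{\max}N}{2} \,\right\}.
\]
I claim $\mathcal{H} = P$. Granting this, the lemma is immediate: every $\lambda \in \Lambda$ satisfies the strict inequalities $\lambda_i < R_{\max}$ and $\sum_i \lambda_i < R_{\max}N/2$, which in particular imply the weak inequalities defining $P$, so $\Lambda \subseteq P = \mathcal{H}$; the inclusion is proper because $P$ contains boundary points (e.g. coordinates equal to $R_{\max}$) that are excluded from $\Lambda$. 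Thus the whole problem reduces to establishing $\mathcal{H} = P$.

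The inclusion $\mathcal{H} \subseteq P$ is the routine direction. Each generator $\gamma \in \tilde{\Gamma}$ has coordinates $\gamma_i = a_i R_{\max}/2$ with $a_i \in \{0,1,2\}$, hence $0 \le \gamma_i \le R_{\max}$, and $\sum_i \gamma_i = \tfrac{R_{\max}}{2}\sum_i a_i \le \tfrac{R_{\max}N}{2}$; so $\tilde{\Gamma} \subseteq P$. Since $P$ is convex and $\mathcal{H}$ is the smallest convex set containing $\tilde{\Gamma}$, we conclude $\mathcal{H} \subseteq P$.

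The substantive direction is $P \subseteq \mathcal{H}$, and for this I would show that every vertex of $P$ already lies in $\tilde{\Gamma}$. Because $P$ is a bounded polytope it equals the convex hull of its (finitely many) vertices, so this yields $P \subseteq \mathrm{conv}(\tilde{\Gamma}) = \mathcal{H}$. At a vertex of $P$, some $K$ linearly independent active constraints are tight, drawn from the box faces $\gamma_i = 0$ and $\gamma_i = R_{\max}$ together with the single knapsack face $\sum_i \gamma_i = R_{\max}N/2$. If all $K$ tight constraints are box faces, every coordinate lies in $\{0, R_{\max}\}$, i.e. $a_i \in \{0,2\}$, and feasibility of the vertex forces $\sum_i a_i \le N$, so the vertex is in $\tilde{\Gamma}$. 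Otherwise the knapsack face is tight together with $K-1$ box faces, which fix $K-1$ coordinates to values in $\{0,R_{\max}\}$ and leave one free coordinate $\gamma_\ell = \tfrac{R_{\max}N}{2} - \sum_{i \ne \ell} \gamma_i$.

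The crux of the argument is the bookkeeping in this last case: the subtracted sum is an integer multiple $m R_{\max}$ of $R_{\max}$ (the fixed coordinates are each $0$ or $R_{\max}$), while $N$ is an integer, so $\gamma_\ell = \tfrac{R_{\max}}{2}(N - 2m)$ is a half-integer multiple of $R_{\max}$; combined with feasibility $0 \le \gamma_\ell \le R_{\max}$ this forces $\gamma_\ell \in \{0, R_{\max}/2, R_{\max}\}$, i.e. $a_\ell \in \{0,1,2\}$, and here $\sum_i a_i = N$. Either way the vertex belongs to $\tilde{\Gamma}$. The only real obstacle is this vertex analysis of the knapsack-active case: one must check both the linear independence of the chosen active constraints (so the point is genuinely a vertex) and the parity/integrality step that pins $\gamma_\ell$ to a multiple of $R_{\max}/2$. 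As an alternative one could instead exhibit, for each $\lambda \in \Lambda$, an explicit convex decomposition into basis allocations by greedily rounding the fractional demands $2\lambda_i/R_{\max}$, but the polytope-integrality route is cleaner and avoids case analysis on the ordering of the coordinates.
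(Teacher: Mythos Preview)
Your proof is correct and follows essentially the same route as the paper: both identify $\mathcal{H}$ with the closed polytope $P=\{\gamma:0\le\gamma_i\le R_{\max},\ \sum_i\gamma_i\le R_{\max}N/2\}$ and then observe that $\Lambda$ is its strict-inequality version. The paper simply asserts the equality $\mathcal{H}=P$ by declaring that the convex hull is obtained by letting the integers $a_i$ range continuously over $[0,2]$, whereas you supply the justification the paper omits, via a vertex enumeration showing every extreme point of $P$ already lies in $\tilde{\Gamma}$.
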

\begin{proof} First, note that the convex hull of $\tilde{\Gamma}$ can be written as follows:
\begin{equation}
\mathcal{H} = \left\{ \gamma | \gamma_i = \frac{a_i \, R_{max}}{2}, a_i \in [0,2] \quad \forall~i, \sum\limits_{i=1}^K a_i \le N  \right\}.
\end{equation}

In other words, the convex hull of the set $\tilde{\Gamma}$ is obtained by allowing $a_i$ to vary continuously. Now using the relationship $a_i = \dfrac{2\gamma_i}{R_{max}}$, we can re-express $\mathcal{H}$ as follows:
\begin{equation}
\begin{aligned} 
\mathcal{H}  & = \left\{ \gamma |  \frac{2\gamma_i}{R_{max}} \in [0,2]~\forall~i, \sum\limits_{i=1}^K \frac{2\gamma_i}{R_{max}} \le N  \right\} \nonumber \\ 
%& = \left\{ \gamma |  0 \le \frac{2\gamma_i}{R_{max}} \le 2~\forall~i, \sum\limits_{i=1}^K \frac{2\gamma_i}{R_{max}} \le N  \right\} \\
 & = \left\{ \gamma |  0 \le \gamma_i \le R_{max}~\forall~i, \sum\limits_{i=1}^K \gamma_i \le \frac{R_{max} N}{2}   \right\}  \supset \Lambda  
\end{aligned}.
\end{equation} \end{proof}

Each basis allocation corresponding to the elements of $\tilde{\Gamma}$ can actually be expressed as two distinct but symmetric allocations of robots to sources/sinks over two successive epochs. For the $i^{th}$ flow, if $a_i = 0$, there is no robot allocated to either the source or sink in either of these two epochs; if $a_i = 1$, a particular robot is assigned to be at the source at the first epoch and at the sink at the second epoch; if $a_i = 2$, two robots are assigned (call them $R1$ and $R2$) such that $R1$ is at the source at the first epoch and at the sink at the second epoch while $R2$ is at the sink at the first epoch and at the source at the second epoch.

The set $\mathcal{H}$ describes all possible robot service rates that can be obtained by a convex combination of these basis allocations. Consider a rate vector $\gamma \in \mathcal{H}$. Since it lies in the convex hull of the set $\tilde{\Gamma}$ it can be described in terms of a vector of convex coefficients $\alpha$ each of whose elements corresponds to a basis allocation of robots. We can therefore\footnote{Here, for ease of exposition, we are assuming that $\alpha_i$ is rational, otherwise it can be approximated by an arbitrarily close rational number which will not affect the overall result.} identify $n_i$ such that ${n_i}/{\sum\limits_i n_i} = \alpha_i$. The given rate vector $\gamma$ can then be scheduled by allocating $n_i$ epochs each for the two parts of the $i^{th}$ basis allocation. And after a total of $\sum\limits_i 2 n_i$ epochs, the whole schedule can be repeated. This schedule will provide the desired service rate vector $\gamma$. 

Thus far the schedules have been derived under the assumption of instantaneous robot movements. Now we consider the effect of transit time. It is possible to choose $T$ or $v$ to be sufficiently large to bound the fraction of time spent in transit by $\epsilon$, i.e. $ \dfrac{d}{vT} < \epsilon$. Thus even while taking into account time wasted in transit, we can scale either time period of the epochs $T$ or the velocity $v$ so as to provide a service rate vector $\gamma'$ that is arbitrarily close to any ideal service rate $\gamma$ in the sense that $\gamma_i - \gamma'_i < \epsilon ~\forall~i$.

We now state one of our main results:

\begin{theorem}\label{thm:capacity}
$\mathbf{\Lambda}$ is the achievable capacity region of the network.
\end{theorem}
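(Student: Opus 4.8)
The plan is to prove the theorem in the two directions required to identify a capacity region: \emph{achievability} (every $\lambda \in \Lambda$ can be stably served) and a matching \emph{converse} (no arrival-rate vector violating the defining inequalities of $\Lambda$ can be stably served). The achievability half is essentially already assembled in the discussion preceding the theorem; what remains is to package it and to exploit the openness of $\Lambda$ to obtain the strict slack needed for stability.

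For achievability, I would fix any $\lambda \in \Lambda$. By Lemma~\ref{lemma:H} we have $\lambda \in \mathcal{H}$. Because $\Lambda$ is open and its defining inequalities are strict, there is room to choose a service-rate vector $\gamma \in \mathcal{H}$ that \emph{strictly} dominates $\lambda$ coordinatewise, i.e. $\gamma_i > \lambda_i$ for all $i$ (inflate each $\lambda_i$ by a common small $\delta$; since $\lambda_i < R_{\max}$ and $\sum_i \lambda_i < R_{\max} N/2$, the inflated vector still lies in $\mathcal{H}$). Writing $\gamma$ as a convex combination of basis allocations in $\tilde{\Gamma}$ with rational coefficients, the periodic two-epoch schedule described above realizes service rate $\gamma$ under instantaneous motion, and a rate $\gamma'$ arbitrarily close to $\gamma$ once transit time is included, by taking $T$ or $v$ large enough that $\gamma_i - \gamma'_i < \epsilon$. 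Choosing $\epsilon < \min_i (\gamma_i - \lambda_i)$ keeps $\gamma'_i > \lambda_i$ for every flow, so each source queue (and each robot queue) is drained strictly faster than it fills; a standard rate-comparison / Lyapunov argument then yields stability of all queues, placing $\lambda$ in the capacity region.

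For the converse, I would derive the two defining inequalities as necessary conditions from flow conservation and a work-conservation cut. In any stabilizing policy, for flow $i$ the long-run rate at which packets are collected from $src(i)$ and the long-run rate at which they are delivered to $sink(i)$ must both equal $\lambda_i$, since otherwise either the source queue or the aggregate robot queue for flow $i$ grows unboundedly. Each robot in any slot communicates with a single node at rate at most $R_{\max}$, so the total collection-plus-delivery activity across all $N$ robots is at most $N R_{\max}$ per slot; since every served packet is counted once as a collection and once as a delivery, summing over flows gives $2\sum_i \lambda_i \le N R_{\max}$, i.e. $\sum_i \lambda_i \le R_{\max} N /2$. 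For the per-flow bound, $src(i)$ is a single node transmitting to at most one robot at a time at rate at most $R_{\max}$, so its injection rate, and hence $\lambda_i$, is at most $R_{\max}$. Thus any stably supportable rate satisfies the (non-strict) versions of the inequalities defining $\Lambda$, so the capacity region is contained in $\overline{\Lambda}$; combined with achievability of the open interior, $\Lambda$ is the capacity region up to its boundary.

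The main obstacle is making the converse fully rigorous. I expect the delicate points to be: (i) justifying the steady-state flow-conservation statement---that stability forces the time-averaged collection and delivery rates to equal $\lambda_i$---which requires fixing a precise notion of stability (e.g. mean rate stability) and a careful time-averaging argument over the $\min$-rate queue updates; and (ii) the claim that a source or sink engages at most one robot per slot and therefore injects or absorbs at most $R_{\max}$, which is the point-to-point half-duplex modeling assumption implicit in the rate definitions and should be stated explicitly. The boundary case $\sum_i \lambda_i = R_{\max} N/2$ (or $\lambda_i = R_{\max}$) is excluded from $\Lambda$ by openness and is the usual measure-zero gap between an achievable open region and its closed converse.
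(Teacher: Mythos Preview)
Your proposal is correct and, on the achievability side, follows the paper's argument essentially verbatim: both of you invoke Lemma~\ref{lemma:H} to place $\lambda$ inside $\mathcal{H}$, realize a dominating $\gamma$ as a convex combination of basis allocations, implement it by the periodic two-epoch schedule, and absorb transit time by scaling $T$ or $v$. Your explicit use of the openness of $\Lambda$ to manufacture a strict slack $\gamma_i>\lambda_i$ before passing to $\gamma'$ is exactly the step the paper leaves implicit.

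The converse is where you and the paper diverge in style. The paper's converse is a two-line assertion: it declares that $\mathcal{H}$ is ``by construction'' the set of all feasible robot service rates, observes that $\mathcal{H}$ is the closure of $\Lambda$, and concludes that anything a bounded distance outside $\Lambda$ is infeasible. Your converse instead derives the defining inequalities of $\Lambda$ directly from first principles via a flow-conservation and cut argument (each packet is collected once and delivered once, total robot activity is at most $NR_{\max}$ per slot, each source injects at most $R_{\max}$). This is more self-contained and makes explicit the half-duplex point-to-point assumption that the paper's ``by construction'' hides; the paper's version, by contrast, is shorter but relies on the reader accepting that the basis-allocation enumeration is exhaustive. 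Both reach the same conclusion, and your acknowledged caveats (the precise notion of stability, the one-robot-per-node assumption) are exactly the places where the paper is silent.
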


\begin{proof} By construction, $\mathcal{H}$ represents the boundary of all feasible robot service rates, and as we have discussed time spent in transit can be accounted for by increasing $T$ or $v$ so that any arrival rate that is in the interior of $\mathcal{H}$ can be served. Since in lemma~\ref{lemma:H}, we have already shown that $\Lambda \subset \mathcal{H}$, any arrival rate in $\Lambda$ can be stably served. 

Furthermore, $\mathcal{H}$ represents the closure of the open set $\Lambda$. Thus any arrival rate vector that is a bounded distance outside of $\Lambda$ cannot be served stably (as it would also be outside of $\mathcal{H}$). 

Together, these imply that $\Lambda$ is the achieveable capacity region of the network. \end{proof}

\subsection{An Example}

Figure~\ref{fig:capacityIllustration} shows the capacity region when $K = 2$, $N = 3$.  The labels such as $(x, y)$ are given to the basis allocations on the Pareto boundary to denote that they can be achieved by allocating an integer number of robots $x$ to flow $1$ and $y$ to flow $2$. Note in particular that the point $(R_{max}, R_{max})$ is outside the region in this case because the only way to serve that rate is to allocate two robots full time to each of the two flows, and we have only 3 robots. The vertices on the boundary of the region, which represent basis allocations, are all in the set $\tilde{\Gamma}$; the convex hull of $\tilde{\Gamma}$ completely describes the region.

\begin{figure}
    \centering
	\includegraphics[width=.35\textwidth]{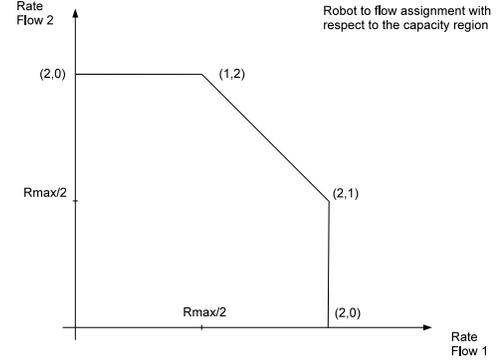}
	\caption{Capacity region for a problem with 3 robots and 2 flows}
	\label{fig:capacityIllustration}
\end{figure}

\subsection{$\Theta(1)$ Capacity Scaling with Controlled Mobility}

In~\cite{tse02}, Grossglauser and Tse first showed that in a network with uncontrolled mobility, under certain mixing conditions, a total capacity of $\Theta(1)$ could be achieved by using one intermediate relay node. Our modeling in this paper shows that the total capacity region scales linearly with the number of robotic relays. Therefore, when the number of robots is linear in the number of flows, the per-pair network capacity will be $\Theta(1)$ here as well. 

There are a few minor differences between the model in this paper and what is considered in~\cite{tse02}, however these differences are not consequential, as they affect only constants in the asymptotic scaling:

\begin{itemize}

\item In~\cite{tse02} it is assumed that all nodes are mobile. In our setting, first note that if the source and sink nodes were \emph{controllably} mobile, then they could each be simply paired up directly and moved arbitrarily close to each other, and we would achieve $\Theta(1)$ scaling without even needing the controllable relays. Even if the source and sink were randomly moving, if they do so within a bounded region in such a way that the controllable robots could always locate and move to them within a finite time, our results would be remain unaffected. 

\item In~\cite{tse02} it is assumed that each source/sink node is a source for one flow and a sink for another. Making the same assumption in our model for the static nodes would merely double the number of flows, and would result only in a constant factor difference.

\item For ease of exposition and analysis, in our work we have assumed that robots do not interfere with each other at any time. However, for deriving the capacity region, it suffices to assume that the robots do not interfere with each other whenever they are arbitrarily close to the source/sink they are communicating with. This is consonant with the modeling and result in~\cite{tse02} that when nodes are sufficiently close to each other they may communicate without experiencing interference from any number of other distant transmitters. 
\end{itemize}

%In our paper, just as in~\cite{tse02} the data is being delivered over two hops, through the use of carriers. (In~\cite{tse02} this is necessitated because the number of source-sink pairs likely to be in close range of each other due to their random mobility is not high, in our setting, none of the source-sink pairs are assumed to be in close range of each other.) 

Finally, in~\cite{tse02}, the $\Theta(1)$ capacity is obtained at the cost of average delay increasing with the size of the network. In stark contrast, in our formulation, as we discuss below, it is still possible to obtain a constant fraction of the full capacity (hence still maintaining the $\Theta(1)$ capacity scaling) even while keeping the delay bounded.

\subsection{Capacity Region under finite velocity and epoch duration}

The analysis thus far assumes that either the velocity of the robot or the epoch duration can be chosen to be arbitrarily large. Next, motivated by practical considerations we consider the case when $v$ and $T$ are finite. In particular, the restriction of $T$ to be finite is useful for two reasons: a) it fixes the overhead of scheduling and b) it can be used to enforce a deterministic upper bound on delay (the time between generation and delivery of a given packet). As may be expected, these constraints reduce the capacity region. 

The fraction of time spent in transit, is bounded by $\dfrac{d}{vT}$, where $d$ is the maximum distance between the static nodes. We assume that $\dfrac{d}{vT} < 1$, which implies that a robot can always reach its destination (source or sink) within an epoch.  

%fix UB statement

This directly yields the following inner-bound on the capacity region when $v$ and $T$ are finite and fixed:
\begin{equation}
\begin{aligned}
 \mathbf{\Lambda_{IB}(v,T)} = \bigg \{ & \mathbf{\lambda} | 0 \le \lambda_i < R_{max}(1-\frac{d}{vT}),~\\
 &\forall~i,~\sum\limits_{i=1}^K \lambda_i < \frac{R_{max}(1-\frac{d}{vT})\,N}{2}\bigg\}
 \end{aligned}.
\end{equation}

\begin{remark}
 Any arrival rate in the inner-bound region can still be achieved while scheduling them this way. As the inner bound is only a constant factor away from the full capacity region, this shows that a capacity scaling of $\Theta(1)$ can be achieved with controllable mobility even while keeping average delay to be bounded. This is in contrast to what happens with opportunistic mobility~\cite{tse02} where a constant capacity scaling is obtained at the cost of unbounded delay. Note further that when the number of robots $N = 2K$, it is possible to schedule the robots for each flow in alternate cycles so that even the worst case delay is bounded deterministically by $2T$.
\end{remark}

\section{Coarse-Grained Backpressure Control}

From the previous discussion, we know that if the arrival rate of each flow is known, and within the ideal capacity region of the system, the epoch duration and a service schedule for the robots can be designed in such a way that the rate is served in a stable manner (maintaining the average size of each queue to be bounded). We consider now the case when the arrival rates are within the capacity region  but not known to the scheduling algorithm, and the $v$ and $T$ parameters are kept fixed. Is it still possible to schedule the movement and communications of the robots in such a way that all queues remain stable?

The answer to this question turns out to be yes, using the notion of Backpressure scheduling first proposed by Tassiulas and Ephremides~\cite{tassiulas92}. We propose an algorithm for scheduling message ferrying robots that achieves throughput-optimal performance for finite $v$ and $T$ parameters, which we refer to as coarse-grained backpressure-based message ferrying (CBMF). The CBMF algorithm works as follows. 

At the beginning of each epoch:
\begin{itemize}
\item compute the weights \mbox{$w_{src(i), j} = (Q_{src(i)} - Q^i_j)$} and \mbox{$w_{sink(i), j} = (Q^i_j)$}. 
\item If the allocation $A(i,j) = 1$, denote \mbox{$w_{i,j} = w_{src(i), j}$}. If \mbox{$A(i,j) = -1$}, denote \mbox{$w_{i,j} = w_{sink(i), j}$}. Else, if \mbox{$A(i,j) = 0$, $w_{i,j} = 0$}.
\item Find the allocation $A$ that maximizes $\sum_{i,j} |A(i,j)| w_{i,j}(A(i,j))$ subject to the following three constraints: 
\begin{itemize}
\item[(1)] $\sum_i |A(i,j)| = 1$, 
\vspace{2mm}
\item[(2)] $\sum_j \mathcal{I}\{A(i,j) = 1\} \le 1$,
\vspace{2mm}
\item[(3)] $\sum_j \mathcal{I}\{A(i,j) = -1\} \le 1$. 
\end{itemize}
The first constraint ensures that each robot is allocated to exactly one source or sink. The second constraint ($\mathcal{I\{\}}$ represents the indicator function) ensures that no source is allocated more than one robot, while the third constraint ensures that no sink is allocated more than one robot.
\end{itemize}

\begin{theorem}
For any arrival rate that is strictly within $\mathbf{\Lambda_{IB}(v,T)}$, the CBMF algorithm ensures that all source and robot queues are stable (always bounded by a finite value). 
\end{theorem}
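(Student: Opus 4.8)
The plan is to prove stability by a Lyapunov drift argument carried out at the \emph{epoch} timescale, since that is the granularity at which CBMF re-optimizes. First I would introduce the quadratic Lyapunov function over the full queue state $\mathbf{Q}(t)$ (all source queues and all per-flow robot queues),
\[
L(t) = \frac{1}{2}\left(\sum_{i=1}^K Q_{src(i)}(t)^2 + \sum_{i,j} \left(Q_j^i(t)\right)^2\right),
\]
and study the one-epoch conditional drift $\Delta(t) = E\!\left[L(t+T) - L(t)\mid \mathbf{Q}(t)\right]$. The target inequality is a negative drift bound of the form $\Delta(t) \le B' - \epsilon \cdot (\text{total backlog})$, which forces the drift to be negative whenever the aggregate queue length is large, so that the standard Lyapunov (Foster--Lyapunov) stability theorem yields boundedness of all queues.

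To obtain the drift bound I would square the source and robot queue update equations, telescope them across the $T$ time steps of an epoch, and sum over all queues. This produces a bound
\[
\Delta(t) \le B - \sum_{i,j} w_{src(i),j}(t)\,\mu^{s}_{i,j}(t) - \sum_{i,j} w_{sink(i),j}(t)\,\mu^{k}_{i,j}(t),
\]
where $\mu^{s}_{i,j}$ and $\mu^{k}_{i,j}$ denote the total packets moved along each matched source- or sink-link during the epoch, the weights $w_{src(i),j}=Q_{src(i)}-Q_j^i$ and $w_{sink(i),j}=Q_j^i$ are exactly those computed by CBMF at the start of the epoch, and $B$ collects all the bounded contributions (arrivals bounded by $\lambda_i T$, service bounded by $R_{max}T$, plus the error from using start-of-epoch weights). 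The crucial observation is then that CBMF selects precisely the allocation $A$ maximizing $\sum_{i,j}|A(i,j)|\,w_{i,j}$ under the matching constraints, which is (up to the service scaling) the quantity being subtracted. Hence CBMF \emph{minimizes} this upper bound over all feasible allocations, so I may substitute any other feasible allocation and still retain a valid upper bound on $\Delta(t)$.

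Next I would invoke the capacity characterization. Because $\lambda$ lies strictly inside $\mathbf{\Lambda_{IB}(v,T)}$, the convex-combination-of-basis-allocations construction (Lemma~\ref{lemma:H} and Theorem~\ref{thm:capacity}, specialized to finite $v,T$) supplies a stationary randomized matching whose expected per-epoch service on each flow strictly dominates $\lambda_i$ by a common slack $\epsilon>0$; here each active matched robot delivers at least $R_{max}(1-\frac{d}{vT})$ per epoch once it reaches its node, which is exactly why the achievable region contracts to $\mathbf{\Lambda_{IB}(v,T)}$. Plugging this randomized policy into the drift bound in place of CBMF (legitimate by the max-weight property just established) and simplifying the telescoped backpressure sum gives
\[
\Delta(t) \le B' - \epsilon\left(\sum_{i} Q_{src(i)}(t) + \sum_{i,j} Q_j^i(t)\right),
\]
whose right-hand side is negative once the total backlog exceeds $B'/\epsilon$. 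Applying the Lyapunov drift theorem then bounds the time-average expected backlog, establishing the claimed stability of all source and robot queues.

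I expect the routine backpressure skeleton to be the easy part; the genuine difficulty is the coarse-grained, finite-transit coupling, which enters the argument in three places that must be controlled carefully rather than waved away. First, the matching weights are frozen at the start of each epoch while the queues evolve for $T$ steps, so I must bound the discrepancy between this ``frozen-weight'' drift and the true drift and absorb it into $B$, which requires a uniform bound on the per-epoch change of every queue. Second, the instantaneous service rate is not constant within an epoch: a robot may spend up to $d/v$ steps in transit at reduced rate, and the rate depends on distance, so I must show the \emph{aggregate} delivered service along an active link over the epoch is at least $R_{max}(1-\frac{d}{vT})$. Third, the $\min\{R,Q\}$ nonlinearity means a matched source or robot may lack enough packets to fill its allotted service; this shortfall must be shown to only help the drift (since the affected queue is already small), and verifying this at the epoch scale rather than the slot scale is the most delicate step.
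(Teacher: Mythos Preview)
Your proposal is correct and follows the same backpressure--Lyapunov skeleton as the paper: a quadratic Lyapunov function on all queues, a drift bound in which the subtracted term is precisely the queue-differential weight that CBMF maximizes, and a comparison against a stationary randomized (``S-only'') allocation built from the convex-hull decomposition of the capacity region, yielding the negative-drift inequality and strong stability.

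The one noteworthy difference is the \emph{granularity} at which the drift is taken. The paper does not telescope across an epoch; instead it immediately abstracts away the intra-epoch dynamics by declaring that robots move instantly and then transmit at a flat reduced rate $R'_{\max}=R_{\max}(1-\tfrac{d}{vT})$, so that the analysis becomes a standard per-slot drift computation with constant link rates. Your proposal, by contrast, keeps the true per-slot dynamics and absorbs the complications (frozen start-of-epoch weights, variable in-transit rate, the $\min\{R,Q\}$ shortfall) into the constant $B$ at the epoch level. Both routes land on the same comparison inequality; the paper's shortcut is quicker but sweeps under the rug exactly the three issues you flag, while your epoch-level treatment is the more honest accounting of why the reduced rate $R_{\max}(1-\tfrac{d}{vT})$ is the right effective service and why CBMF's start-of-epoch maximization still controls the drift despite queues evolving during the epoch.
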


\begin{proof}
The proof essentially follows the treatment in~\cite{NeelyBook}.

Since the arrival rate is strictly interior in $\mathbf{\Lambda_{IB}(v,T)}$, we can make some simple assumptions. We ignore data transmitted when robots are moving. And at the beginning of each epoch, once the robots are allocated, they move instantly to their destinations(sources or sinks) and remain static with  a constant transmission rate as $R_{max}' = R_{max}(1-\frac{d}{vT})$.

Let $b_{ij}(t)$ denote whether robot $j$ is allocated to $src(i)$ or not. $b_{ij}(t)=1$ means robot $j$ is allocated to $src(i)$; $b_{ij}(t)=0$ means robot $j$ is not allocated to $src(i)$. Similarly, $c_{ij}(t)$ denotes whether robot {j} is allocated to $sink(i)$ or not. $c_{ij}(t)=1$ means robot $j$ is allocated to $sink(i)$, and $c_{ij}(t)=0$ means robot $j$ is not allocated to $sink(i)$. Then we have $R_{src(i),j}(t) = b_{ij}(t)R_{max}'$ and $R_{j,sink(i)}(t)=c_{ij}(t)R_{max}'$.

The queue backlog at source $i$, $\forall i \in  \{1,...,K \}$,  is updated as follows
\begin{small}
\begin{equation}
Q_{src(i)}(t+1) \!\!= \!\!max\! \left \{Q_{src(i)}(t)-(b_{i1}(t)+...+b_{iN}(t))R_{max}',0 \right \} + \lambda_{i}.
\end{equation}
\end{small}
The queue backlog at robot $j$ for flow $i$, $\forall i \in {1,...,K} $ and $j \in {1,...,N}$, is given by
\begin{scriptsize}
\begin{equation}
Q_j^i (t+1) \!\!= \!\!  max \! \left \{Q_j^i(t)-c_{ij}(t)R_{max}',0 \right \} + \!min\! \left \{Q_{src(i)}(t), b_{ij}(t)R_{max}' \right \}.
\end{equation}
\end{scriptsize}

Define the queue backlog vector of this system as 
\begin{scriptsize}
\begin{equation}
\mathbf{\Theta}(t) \!\!=\!\! ( \! Q_{src(1)}(t),...,Q_{src(K)}(t), Q_1^1(t),...,Q_1^K(t), ..., Q_N^1(t), ..., Q_N^K(t) \!).
\end{equation}
\end{scriptsize}
\noindent And the Lyapunov function as 
\begin{equation}
L(\mathbf{\Theta}(t))=\frac{1}{2} \left [\sum\limits_{i=1}^K Q_{src(i)}(t)^2 + \sum\limits_{i=1}^{K} \sum\limits_{j=1}^{N} Q_j^i(t)^2 \right ].
\end{equation}

Then ,

\begin{eqnarray}
L(\mathbf{\Theta}(t+1))\!\!\!\!\! \!&-&\!\!\!\!\! \! L(\mathbf{\Theta}(t))  \nonumber \\
&=& \!\!\!\!\! \!\frac{1}{2} \!\! \left \{\sum\limits_{i=1}^{K} \left [Q_{src(i)}(t+1)^2 - Q_{src(i)}(t)^2 \right ]  \right \}  \nonumber \\
&+& \!\!\!\!\! \! \frac{1}{ 2} \left \{ \sum\limits_{i=1}^{K} \sum\limits_{j=1}^{N} \left [ Q_j^i(t+1)^2-Q_j^i(t)^2 \right ]  \!\! \right \}  \nonumber \\ 
%&=&\!\!\!\!\! \! \frac{1}{2} \!\! \left \{ \!\! \sum\limits_{i=1}^{K} \!\!  \left [  \!\! \left ( \!\! max \!\! \left \{ \!\! Q_{src(i)}(t) \!\!- \!\!\! \sum\limits_{j=1}^{N}b_{ij}(t)R_{max}',0 \!\! \right \} \!\!+\!\! \lambda_{i} \!\!\right )^2\! \!\!\!\!-\!\! Q_{src(i)}(t)^2  \!\! \right ]  \! \!\! \right \}\nonumber \\
%&+&\!\!\!\!\! \!  \frac{1}{2} \!\!\left \{ \sum\limits_{i=1}^{K} \sum\limits_{j=1}^{N}  \!\!\left [ \left (  max \left \{Q_j^i(t)-c_{ij}(t)R_{max},0  \right \} + min  \left \{Q_{src(i)}(t), b_{ij}(t)R_{max}  \right \} \right )^2 - Q_j^i(t)^2 \right ]  \!\!\right\}  \nonumber \\
&\leqslant&\!\!\!\!\! \! \sum\limits_{i=1}^{K} \frac{(\sum\limits_{j=1}^{N}b_{ij}(t)R_{max}')^2+\lambda_{i}^2}{2} \nonumber \\
&+& \!\!\!\!\! \!  \sum\limits_{i=1}^{K}\sum\limits_{j=1}^{N}\frac{(c_{ij}(t)R_{max}')^2 + (b_{ij}(t)R_{max}')^2}{2} \nonumber \\
&+&\!\!\!\!\! \! \sum\limits_{i=1}^{K} Q_{src(i)}(t) \left [\lambda_{i}-\sum\limits_{j=1}^{N}b_{ij}(t)R_{max}') \right ] \nonumber \\
&+& \!\!\!\!\! \!  \sum\limits_{i=1}^{K} \sum\limits_{j=1}^{N}Q_j^i(b_{ij}(t)-c_{ij}(t))R_{max}' .\!\!\!\!\! \!
\end{eqnarray}
\noindent where the inequality comes from equations (8) and (9), and 

\begin{equation}
max \left \{ Q-b,0  \right \} + a \leqslant Q^2 + a^2 + b^2 + 2Q(a-b).
\end{equation}
\begin{equation}
max \left \{ Q_1-c,0 \right \} + min \left \{ Q_2,b\right \} \leqslant max \left \{ Q_1-c,0 \right \} + b.
\end{equation}

Define the conditional Lyapunov drift as 
\begin{equation}
\bigtriangleup(\mathbf{\Theta}(t))= \mathbb{E} \left \{L(\mathbf{\Theta}(t+1)) - L(\mathbf{\Theta}(t)) | \mathbf{\Theta}(t) \right \}.
\end{equation}

Since $\forall i \!\! \in \!\!  \{1,...,K\}$ and $\forall j \! \! \in \!\! \{1,...,N\}$, $\qquad \qquad \qquad \qquad b_{ij}(t)R_{max}' \!\! \leqslant  \!\! R_{max}$, $c_{ij}(t)R_{max}' \!\! \leqslant \!\! R_{max}'$ and the arrival rates are finite, the first two terms on the left-hand-side of inequality (15) can be upper bounded by a finite constant $B$. Thus,
\begin{small}
\begin{eqnarray}
\!\!\!\!\!\!\! \!&\bigtriangleup &\!\!\!\!\!\!\! \! (\mathbf{\Theta}(t))\!\!\!\!\! \!\nonumber \\
&\leqslant&\!\!\!\!\! \! B + \sum\limits_{i=1}^{K} Q_{src(i)}(t)\lambda_i \nonumber \\
&-& \!\!\!\!\!\!\!\! \! \sum\limits_{i=1}^{K}\!\!\! \sum\limits_{j=1}^{N} \!\!  \mathbb{E} \!\! \left \{ (Q_{src(i)}(t)\!-\!Q_j^i(t))b_{ij}(t) R_{max}' \!+\! Q_j^i(t)c_{ij}(t) R_{max}' | \mathbf{\Theta}(t) \! \right \} . \nonumber \\
\end{eqnarray}
\end{small}

Applying the CBMF algorithm to allocate robots, the last term on the right-hand-side can be maximized, thus the conditional drift can be minimized. Let $b_{ij}^*(t)$ and $c_{ij}^*(t)$ be any other robot allocation policy, then we have 

\begin{small}
\begin{eqnarray}
\bigtriangleup \!\!\!\!\!\!\! \!&(&\!\!\!\!\!\!\! \! \mathbf{\Theta}(t)) \nonumber \\  
&\leqslant& \!\!\!\!\! \! B + \sum\limits_{i=1}^{K} Q_{src(i)}(t)\lambda_i \nonumber \\
&-& \!\!\!\!\! \! \! \sum\limits_{i=1}^{K} \! \sum\limits_{j=1}^{N} \mathbb{E}  \!\! \left \{ (Q_{src(i)}(t)\!\!-\!\!Q_j^i(t))b_{ij}^*(t) R_{max}' \!\! +\!\!  Q_j^i(t)c_{ij}^*(t) R_{max}' | \mathbf{\Theta}(t) \right \} \nonumber \\
 \!\!\!\!\! \!&\leqslant& \!\!\!\!\! \! B  \!\! -  \!\!\sum\limits_{i=1}^K \!Q_{src(i)}(t) \!\!  \left (  \mathbb{E}\! \left \{  \sum\limits_{j=1}^{N} b_{ij}^*(t)R_{max}' | \mathbf{\Theta}(t) \right \} - \lambda_i \!\! \right) \nonumber \\
  \!\!\! &-& \!\!\! \sum\limits_{i=1}^{K} \sum\limits_{j=1}^{N} Q_j^i(t)\mathbb{E} \left \{  \left (  c_{ij}^*(t)-b_{ij}^*(t)  \right ) R_{max}' | \mathbf{\Theta}(t)    \right \}. \nonumber \\
\end{eqnarray} 
\end{small}

In order to upper bound the terms on the right-hand-side, let us first consider the following problem: given an arrival rate vector $\mathbf{\lambda} = (\lambda_1, ..., \lambda_K) \in \mathbf{\Lambda_{IB}(v,T)}$ \emph{a priori}, we want to design an S-only (depends only on the channel states) algorithm to

\begin{eqnarray}
\text{find} &&\quad \quad \qquad  \epsilon  >   0 \nonumber \\
\text{s.t.} && \!\!\!\!\!\!\!\!\!\!\! \! \!\!\!\!\! \! \!\!\!\!\! \!  \quad \quad \qquad \lambda_i + \epsilon \leqslant \mathbb{E} \left \{  \sum\limits_{j=1}^{N} b_{ij}^*(t)R_{max}' \right \}  \quad \forall i \in \{1,...,K\}, \nonumber \\
&& \mathbb{E} \left \{ b_{ij}^*(t)R_{max}' \right \} + \epsilon \leqslant \mathbb{E} \left \{ c_{ij}^*(t)R_{max}' \right \}, \nonumber \\
 && \forall i \in \{1,...,K\} \quad \text{and} \quad \forall j \in \{1,...,N \}.\quad 
\end{eqnarray}

%If we design an S-only algorithm which gives us a $\epsilon > 0$ , we can use this to show the system is strongly stable, i.e., the average total queue backlogs can be upper bounded by some constant $C$,
%\begin{equation}
% \lim_{t \rightarrow \infty} \frac{1}{t}\sum\limits_{\tau=0}^{t-1} \left ( \sum\limits_{i=1}^{K} Q_{src(i)}(\tau)+\sum\limits_{i=1}^{K} \sum\limits_{j=1}^{N} Q_j^i(\tau) \right ) \leqslant C.
% \end{equation}

The S-only algorithm to achieve any given arrival rates which are strictly interior to the capacity region is designed as follows:

Since $\mathbf{\lambda} = (\lambda_1, ..., \lambda_K) \in \mathcal{H} / \partial \mathcal{H} $, we can find a vector $\mathbf{\epsilon} = (\epsilon_1, ..., \epsilon_K)$ such that $\mathbf{\lambda}' = (\lambda_1+\epsilon_1, ..., \lambda_K + \epsilon_K) \in \partial \mathcal{H}$. Let $\epsilon_{max} = min \{\epsilon_1, ..., \epsilon_K\}$, and since $\mathbf{\lambda}$ is strictly interior in $\mathcal{H}$, we have $\epsilon_{max} > 0$.

Since $\lambda'' = (\lambda_1+\epsilon_{max}, ..., \lambda_K+\epsilon_{max}) \in \mathcal{H}$, it can be represented as a convex combination of basis allocations. To be specific, in a network containing $K$ flows and $N$ robots, there are $M$ (depends on $K$ and $N$, and is finite) basis allocations in total. Let $[\lambda_{l1},..., \lambda_{lK}]^T$, $\forall l \in \{1,...,M \}$ denote the capacity the $l_{th}$ allocation can provide.  Let $\mathbf{\alpha} = (\alpha_1, ..., \alpha_M)$ be the allocation vector of the convex coefficients. And we have 
\begin{eqnarray}
\alpha_1 \!\!\!\!\!\!\! &[&\!\!\!\!\!\!\! \lambda_{11},..., \lambda_{1K}]^T + ... + \alpha_M [\lambda_{M1},..., \lambda_{MK}]^T \nonumber \\
\!\!\!\!\!\!&=&\!\!\!\!\!\! [\lambda_1 + \epsilon_{max}, ..., \lambda_K+\epsilon_{max}]^T.
\end{eqnarray}

Let us identify integers $n_l$ satisfying $\frac{n_l}{\sum\limits_{i=1}^{M} n_i} = \alpha_l$, $\forall l \in \{1,...,M \}$. Then the arrival rate vector $\mathbf{\lambda}''$ can be served by first allocating $n_l$ epochs for the $l_{th}$ basis allocation, and allocating the next $n_l$ epochs for the same $l_{th}$ basis allocation but exchanging the robots locations, $\forall l \in \{1,...,M \}$. And after a total of $2\sum\limits_{l=1}^{M} n_l$ epochs, repeat the whole process. 

Since the served rate $\lambda''$ is $\epsilon_{max}$ greater than the original given rate $\lambda$, we can change the above scheduling scheme by adding a few more epochs to each $2\sum\limits_{l}n_{l}$ epochs period, which still supports the given input rate. During these additional epochs, we can evenly allocate one robot at each sink to help deliver data. In this way we can make sure there exists an $\epsilon' > 0$ such that

\begin{small}
\begin{eqnarray}
&& \lambda_i + \epsilon'  \leqslant    \mathbb{E} \left \{  \sum\limits_{j=1}^{N} b_{ij}^*(t)R_{max}' \right \}, \, \forall i \in \{1,...,K\},  \nonumber \\
&& \mathbb{E} \left \{ b_{ij}^*(t)R_{max}' \right \} + \epsilon'  \leqslant  \mathbb{E} \left \{ c_{ij}^*(t)R_{max}' \right \}, \nonumber  \\
&&  \forall i \in \{1,...,K\} \, \text{and} \, \forall j \in \{1,...,N \}.
\end{eqnarray}
\end{small}

Take inequalities (20) into equation (17), we have
\begin{equation}
\bigtriangleup (\mathbf{\Theta}(t)) \leqslant B - \epsilon'  \left [ \sum\limits_{i=1}^{K} Q_{src(i)} (t)+ \sum\limits_{i=1}^{K} \sum\limits_{j=1}^{N}Q_j^i(t) \right ].
\end{equation}

Taking iterated expectations, summing the telescoping series, and rearranging terms yields:

\begin{eqnarray}
\frac{1}{t} \sum\limits_{\tau=0}^{t-1}\!\!\!\!\!\!\! \!&&\!\!\!\!\!\!\! \! \left [  \sum\limits_{i=1}^{K} \mathbb{E} \left \{ Q_{src(i)}(\tau) \right \}  + \sum\limits_{i=1}^{K}\sum\limits_{j=1}^{N} \mathbb{E} \left \{ Q_j^i(\tau) \right \} \right] \nonumber \\
\!\!\!\!&\leqslant&\!\!\!\! \frac{B}{\epsilon'} + \frac{\mathbb{E}\{L(\mathbf{\Theta}(0)) \} }{\epsilon't} .
\end{eqnarray}

Therefore, 
\begin{equation}
\lim_{t \rightarrow \infty}\frac{1}{t} \sum\limits_{\tau=0}^{t-1} \left [  \sum\limits_{i=1}^{K} \mathbb{E} \left \{ Q_{src(i)}(\tau) \right \} + \sum\limits_{i=1}^{K}\sum\limits_{j=1}^{N} \mathbb{E} \left \{ Q_j^i(\tau) \right \}  \right] \leqslant \frac{B}{\epsilon'} ,
\end{equation}
\noindent which indicates that the system is strongly stable. 
\end{proof}

\begin{remark}
CBMF is  provably stable for all arrival rates up to the inner-bound. However, as $v$ and $T$ increase, the inner-bound approaches the ideal capacity region.
\end{remark}

%CBMF treats the rate for each matching to be the same in its weight calculation, and as a result it is not provably stable for all arrival rates up to the outer-bound (which as discussed before is in fact achievable using scheduling with prior knowledge of the arrival rates, for a given $v$, $T$); this theoretical guarantee can only be provided for all arrival rates up to the inner-bound. However, as $v$ and $T$ increase, the inner-bound approaches the ideal capacity region.

%The CBMF algorithm ensures that the worst case delay is no more than $2T$. This is because any robot that picks up data from a given source in one epoch will have a positive queue differential with respect to that flow's sink and will be assigned to that sink in the next epoch. Under this condition, we know that an upper-bound on the capacity that can be achieved is given by $\mathbf{\Lambda_{OB}}(v,T)$. 

\section{Delay Analysis for Single-Flow Two-Robots}

Consider a system which contains one source $S$, one sink $D$, and two robots $R_1$, $R_2$. Initially, their queue backlogs are empty. The distance between source and sink is $d$. $R_1$ is at the same location as $D$, and $R_2$ is at the same location as $S$. The moving speeds of robots is $v$. The transmission rate function is $R( x )$. 

Without loss of generality, we assume that after applying CFBP at the beginning of the first epoch, $R_1$ (called receiving robot) is allocated to $S$ to receive data, and $R_2$ (called delivering robot) is allocated to $D$ to deliver data. Thus, in the first epoch, $R_1$ moves towards the source and keep receiving data. Since $R_2$ doesn't have any data stored in its queue, it just moves to the destination without delivering data. At the end of this epoch, $Q_s = 0$, $Q_{R_1} = \lambda T$ and $Q_{R_2} = 0$. 

In the second epoch, after applying CFBP, $R_1$ will be allocated to the destination to deliver data, and $R_2$ will be allocated to the source to receive data. Applying CFBP algorithm at the begging of each epoch, this whole process  repeats every two epochs in the flowing epochs. %Note that in order to keep the system stable, the input rate $\lambda$ cannot be arbitrarily large so that at the end of each epoch, $Q_s = 0$. Otherwise, the packets in the queue of the source will accumulate to infinity as time goes to infinity.

Since we focus on the stable state of the system in the long run, we can ignore the first epoch. What's more, the system evolves all the same in the following epochs except for changing the roles of $R_1$ and $R_2$(In one epoch, one is the receiving robot and the other is the delivering robot; In the next epoch, the other way around.). Thus, we can analyze one particular epoch instead. Let us focus on the second epoch. At the beginning of this epoch, queues at source $S$ and robot $R_2$ are both empty, and queue at $R_1$ is $\lambda T$. According to the CFBP algorithm, in the second epoch, robot $R_1$ is the delivering robot moving towards the sink and deliver data; and robot $R_2$ is the receiving robot moving towards the source to receive data. Depending on different values of arrival rate $\lambda$, the system evolves as one the following two cases.

Case 1: when arrival rate $\lambda$ is small enough so that the delivering robot $R_1$ can deplete all its packets before reaching the sink. Let's $t_1^*$ be the time when the queue at $R_1$ is empty. Then, $\lambda T - \int_0^{t_1^*} R(d-vt)dt = 0$, and $t_1^* \leqslant \frac{d}{v}$. Note the condition $t_1^* \leqslant \frac{d}{v}$ also provides an upper bound $\hat{\lambda}_{max}$ for the input rate in this case. 

%Let's ignore the first epoch and analyze the second epoch. The following epochs are exactly the same as the second epoch except for the changing the role of $R_1$ and $R_2$.

1) When $0 \leqslant t_1 \leqslant t_1^*$, the delivering robot $R_1$ has data in its backlog and keeps transmitting data to $D$. Thus, its queue at time $t_1$ is $Q_{R_1}(t_1) = \lambda T - \int_{0}^{t_1} R(d-vt) dt$. At the same time, new data arrives at the source node $S$ at rate $\lambda$, and the receiving robot $R_2$ keeps receiving data from $S$. The total amount of data at $S$ and $R_2$ at $t_1$ is $Q_s(t_1) + Q_{R_2}(t_1) = \lambda t_1$. Thus, the total queue backlog in the system at $t_1$ is 
\begin{small}
\begin{equation}
Q_{tot}(t_1) = Q_s(t_1) + Q_{R_1}(t_1) +Q_{R_2}(t_1) = \lambda T - \int_{0}^{t_1} \! R(d-vt)dt + \lambda t_1. 
\end{equation}
\end{small}

2) When $t_1^*<t_1\leqslant T $, the delivering robot $R_1$ has no data to transmit, thus $Q_{R_1}(t_1) = 0$. The total amount of data at $S$ and $R_2$ at $t_1$ is still $Q_s(t_1) + Q_{R_2}(t_1) = \lambda t_1$. Thus, the total queue backlog in the system at $t_1$ is 
\begin{equation}
Q_{tot}(t_1) = \lambda t_1.
\end{equation}

By definition, the time average total queue is $\qquad \qquad \qquad \bar{Q}_{tot} = \frac{1}{T} \int_0^T Q(\tau) d \tau$. Then, we have
\begin{eqnarray}
\bar{Q}_{tot}  \!\!\!\!&=&\!\!\!\! \frac{1}{T} \left \{ \int_{0}^{t_1^*} \!\! \left \{  \lambda T \!\!- \!\! \int_{0}^{\tau} \! R(d-vt)dt \!\!+\!\! \lambda \tau \right \} d\tau + \int_{t_1^*}^{T} \lambda \tau d \tau \!\!\right \} \nonumber \\
\!\!\!\!&=&\!\!\!\! \lambda t_1^* + \frac{\lambda T}{2} - \frac{1}{T} \int_0^{t_1^*} \left \{ \int_0^{\tau} R(d-vt)dt \right \} d \tau.
\end{eqnarray}

And by Little's Law, the time average delay of this system is
\begin{equation}
\bar{D} = \frac{\bar{Q}_{tot}}{\lambda} =   t_1^* + \frac{T}{2} - \frac{1}{\lambda T} \int_0^{t_1^*} \left \{ \int_0^{\tau} R(d-vt)dt \right \} d \tau. 
\end{equation}

Case 2: when arrival rate $\lambda$ ($\hat{\lambda}_{max}<\lambda < \lambda_{max}$) is large enough so that the delivering robot $R_1$ cannot finish depleting all its data during moving, i.e., it will keep downloading data at rate $R_{max}$ while it reaches the destination. Let $t_2^*$ be the time when the queue backlog at $R_1$ is empty. Then, $\quad \quad \quad \quad \quad \quad \lambda T - \left \{ \int_{0}^{\frac{d}{v}} R(d-vt)dt + (t_2^* - \frac{d}{v})R_{max} \right \} = 0$. Note $t_2^* \leqslant T$ also provides the upper bound capacity $\lambda_{max}$ for the stable system. (where $ \lambda_{max} = R_{avg} = \frac{\int_{0}^{\frac{d}{v}}R(d-vt)dt + (T-\frac{d}{v})R_{max}}{T}$)

1) When $0 \leqslant t_1 \leqslant \frac{d}{v}$, the delivering robot $R_1$ keeps moving towards $D$ while transmitting data. Its queue backlog at $t_1$ is $Q_{R_1}(t_1) = \lambda T - \int_{0}^{t_1}R(d-vt)dt$. The queue at source $S$ and receiving robot $R_2$ at time $t_1$ is $Q_s(t_1) + Q_{R_2}(t_1) = \lambda t_1$. Thus, the total queue backlog in the system at $t_1$ is 
\begin{eqnarray}
Q_{tot}(t_1) \!\!\!\!&=&\!\!\!\! Q_s(t_1) + Q_{R_1}(t_1) +Q_{R_2}(t_1) \nonumber \\
\!\!\!\!&=&\!\!\!\! \lambda T - \int_{0}^{t_1} \! R(d-vt)dt + \lambda t_1. 
\end{eqnarray}

2) When $\frac{d}{v}<t_1 \leqslant t_2^*$, delivering robot $R_1$ stays at the same location with sink $D$ and keeps delivering data at the maximum rate $R_{max}$. Then its queue at $t_1$ is $R_1(t_1) = \lambda T - \int_{0}^{\frac{d}{v}}R(d-vt)dt -(t_1-\frac{d}{v})R_{max}$. The queue at $S$ and $R_2$ is $Q_s(t_1) + Q_{R_2}(t_1) = \lambda t_1$. Thus, the total queue backlog at $t_1$ is 
\begin{equation}
Q_{tot}(t_1) = \lambda T - \int_{0}^{\frac{d}{v}} \! R(d-vt)dt - (t_1 - \frac{d}{v})R_{max} + \lambda t_1.
\end{equation}

3) When $t_2^*<t_1 \leqslant T$, the delivering robot $R_1$ has already delivered all its data, thus $Q_{R_1}(t_1) = 0 $. Besides, we still have $Q_s(t_1) + Q_{R_2}(t_1) = \lambda t_1$. Thus, the total queue backlog is 
\begin{equation}
Q_{tot}(t_1) = \lambda t_1.
\end{equation}

Thus,
\begin{small}
\begin{eqnarray}
\bar{Q}_{tot} \!\!\!\! &=& \!\!\!\!  \frac{1}{T}  \{ \int_0^{\frac{d}{v}} \left \{   \lambda T - \int_{0}^{\tau}R(d-vt)dt   + \lambda \tau \right \} d \tau \nonumber \\
\!\!\!\!&+& \!\!\!\! \int_{\frac{d}{v}}^{t_2^*} \left \{ \lambda T    - \int_{0}^{\frac{d}{v}}R(d-vt)dt  - (\tau - \frac{d}{v})R_{max}  + \lambda \tau \right \} d \tau \nonumber \\
\!\!\!\! &+& \!\!\!\!  \int_{t_2^*}^{T} \lambda \tau d \tau  \} \nonumber \\
\!\!\!\! &=& \!\!\!\! \lambda t_2^* + \frac{\lambda T}{2} - \frac{1}{T} \int_0^{\frac{d}{v}} \left \{ \int_0^{\tau} R(d-vt)dt \right \} d \tau  \nonumber \\
\!\!\!\!&-&\!\!\!\! \frac{1}{T}(t_2^*-\frac{d}{v}) \int_{0}^{\frac{d}{v}} R(d-vt)dt - \frac{R_{max}}{2T}(t_2^*-\frac{d}{v})^2. \nonumber \\
\end{eqnarray}
\end{small}

and

\begin{eqnarray}
\bar{D} = \frac{\bar{Q}_{tot}}{\lambda}  \!\!\!\!&=&\!\!\!\!   t_2^* + \frac{T}{2} - \frac{1}{\lambda T} \int_0^{\frac{d}{v}} \left \{ \int_0^{\tau} R(d-vt)dt \right \} d \tau  \nonumber \\
\!\!\!\!&-&\!\!\!\! \frac{1}{\lambda T} (t_2^*-\frac{d}{v}) \int_{0}^{\frac{d}{v}} R(d-vt)dt - \frac{R_{max}}{2 \lambda T }(t_2^*-\frac{d}{v})^2. \nonumber \\
\end{eqnarray}

\begin{figure*}[tbh]
   \centering
       \vspace{-.75in}
   \includegraphics[width=.45\textwidth]{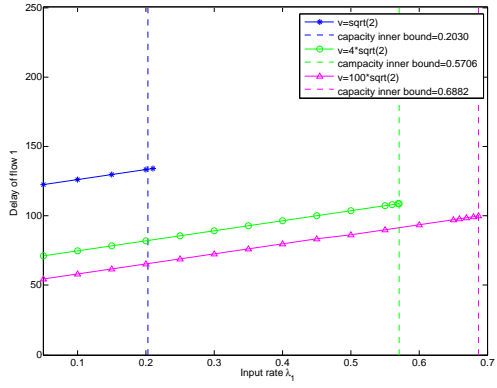}
\includegraphics[width=.45\textwidth]{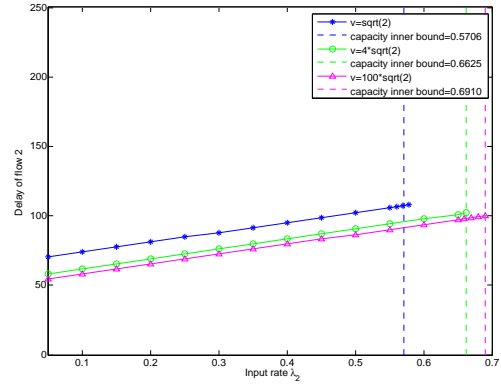}
    \vspace{-27.8mm}
     \caption{Delay of flows 1 (left) and 2 (right) as we vary $v$ for $T=100$}
    \label{fig:dvv}
\end{figure*}

\begin{figure*}[thb]
    \centering
    \vspace{-1.2in}
   \includegraphics[width=.45\textwidth]{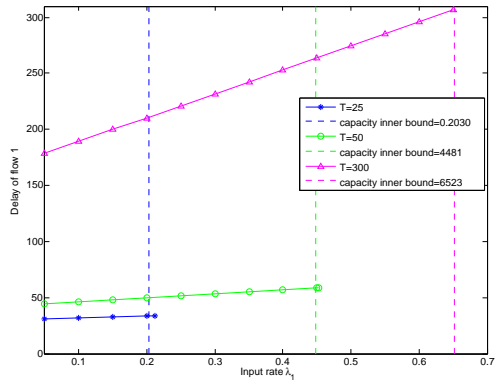}
  \includegraphics[width=.45\textwidth]{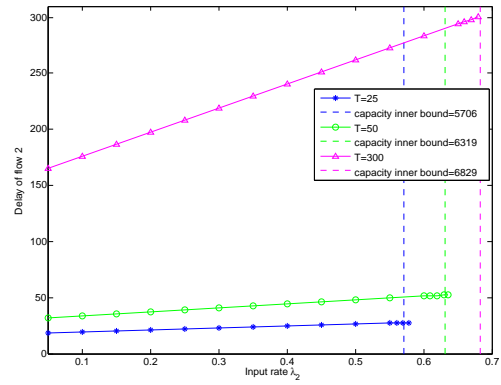}
    \vspace{-27.8mm}
 \caption{Delay of flows 1 (left) and 2 (right) as we vary $T$ for $v=4*\sqrt{2}$}
	\label{fig:dvt}
\end{figure*}

\begin{figure*}[thb]
    \centering
    \vspace{-1.2in}
   \includegraphics[width=.45\textwidth]{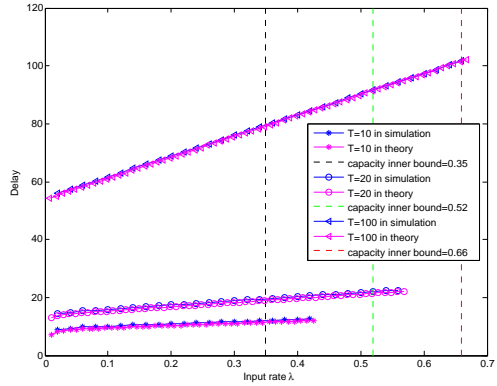}
  \includegraphics[width=.45\textwidth]{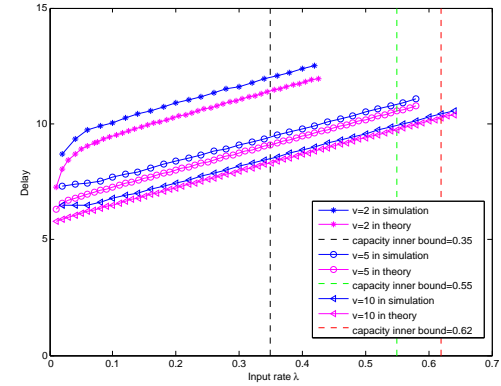}
    \vspace{-27.8mm}
 \caption{Delay of one-flow two-robots as we vary $T$ for $v=2$ (left) and we vary $v$ for $T=10$ (right)}
	\label{fig:1d}
\end{figure*}

\section{Simulations}
We first present numerical simulation results for two flows and four robots. We use $\eta = 2$, $C = 1$, $d_1 = 25, d_2 = 100$, $v$ and $T$ and $\lambda_i$ are varied as shown in the figures. In figure~\ref{fig:dvv} we see the average end-to-end delay (time taken for a packet generated at the source to reach the sink; it is obtained by measuring the average total queue size for each flow in the simulations and dividing by the arrival rate, as per Little's Theorem~\cite{garcia93}) versus arrival rate for each flow, plotted wherever CBMF results in stable queues; we find that we are able to get converging, bounded delays (indicative of stability) even beyond the inner-bound capacity line. Also marked on the figure is the lower (inner) bound of capacity, for rates below which CBMF is provably stable. We see that as the velocity increases, so does the capacity, and at the same time the delay decreases. Thus improvement in robot velocity benefits both throughput and delay performance of CBMF, as may be expected. Figure~\ref{fig:dvt}, in which the velocity is kept constant across curves but the epoch duration is varied, is somewhat similar but with one striking difference, however, as the epoch duration increases, so does the capacity; but at the same time, the average delay also increases (for the same arrival rates, so long stability is maintained). Thus, increasing the scheduling epoch duration improves throughput but hurts delay performance.

Second, we present numerical simulation results for delay in one-flow two-robots system. We use $d=10$, $T$, $v$ and $\lambda$ are varied as shown in Figure~\ref{fig:1d}. As it is shown in Figure~\ref{fig:1d}, the simulation results match the theoretical results. We can reach the same conclusion as before: increasing the scheduling epoch duration improves throughput but hurts delay performance; however, increasing robot velocities benefits both throughput and delay performance. Besides, in the Figure 5, delay seems to be a linear function of the input rate $\lambda$. This can be shown theoretically if we ignore the integral part of equation (27) or equation (32).

%\begin{figure}
%    \centering
%   \includegraphics[width=.45\textwidth]{}
%    \vspace{-24.8mm}
%    \caption{total queue length of flow 2 as we vary $T$ for $v=4*\sqrt{2}$}
%	\label{fig:s2 queue wrt T}
%\end{figure}

%\begin{figure*}[h]
%    \centering
%   \includegraphics[width=.35\textwidth]{}
%  \includegraphics[width=.35\textwidth]{}
%    \vspace{-24.8mm}
%    \caption{max input rate of $src(1)$ as we (left) vary $v$ for $T=100$, and (right) vary $T$ for $v = 4*\sqrt{2}$}
%    \label{fig:s1 max rate wrt v}
%\end{figure*}

%\begin{figure}
 %   \centering
  % \includegraphics[width=.45\textwidth]{}
   % \caption{max input rate of $src(2)$ as we vary $v$ for $T=100$}
    %\label{fig:s2 max rate wrt v}
%\end{figure}

%\begin{figure}
%    \centering
%   \includegraphics[width=.45\textwidth]{figs/s1_maxrate_T}
%    \vspace{-24.8mm}
%    \caption{max input rate of $src(1)$ as we vary $T$ for $v=4*\sqrt{2}$}
%    %\vspace{-2.8mm}
%    \label{fig:s1 max rate wrt T}
%\end{figure}

%\begin{figure}
 %   \centering
  % \includegraphics[width=.45\textwidth]{}
   % \caption{max input rate of $src(2)$ as we vary $T$ for $v=4*\sqrt{2}$}
    %\label{fig:s2 max rate wrt T}
%\end{figure}

\section{Conclusions}

This paper has addressed two fundamental questions in robotic message ferrying for wireless networks: what is the throughput capacity region of such systems? How can they be scheduled to ensure stable operation, even without prior knowledge of arrival rates? There are a number of open directions suggested by the present work. The first is to improve the CBFM algorithm to support the entire capacity region without delay inefficiency, possibly by adapting the schedule length based on observed delay or by considering finer-grained motion control. Finally, we are interested in developing decentralized scheduling mechanisms that the robots can implement in a distributed fashion.

\end{document}